\documentclass[conference]{IEEEtran}
\IEEEoverridecommandlockouts
\usepackage{cite}
\usepackage{amsmath,amssymb,amsfonts}
\usepackage{amsthm}
\usepackage{algorithmic}
\usepackage{graphicx}
\usepackage{textcomp}
\usepackage{xcolor}
\def\BibTeX{{\rm B\kern-.05em{\sc i\kern-.025em b}\kern-.08em
    T\kern-.1667em\lower.7ex\hbox{E}\kern-.125emX}}
\newtheorem{theorem}{Theorem}
\newtheorem{lemma}{Lemma}

\newtheorem{proposition}{Proposition}
\newtheorem{definition}{Definition}
\newtheorem{remark}{Remark}

\usepackage{booktabs}
\usepackage{multirow}
\usepackage[ruled,vlined,linesnumbered]{algorithm2e}
\SetKwInput{KwInput}{Input}
\SetKwInput{KwOffline}{Offline}
\SetKwInput{KwOnline}{Online}
\SetKwInput{KwOutput}{Output}
\DeclareMathOperator{\col}{col}
\begin{document}

\title{Fast-DeePC: Fast Data-Enabled Predictive Control using Soft-constrained Scheme}

\author{Terry Zhang$^{1,*}$,
Jan Drgona$^1$,
Malika Sader$^2$, Di Liu$^3$,
Liang Wu$^{1,*,\dagger}$
\thanks{This research is partially supported by the U.S. DOE, Office of Science, ASCR program under the Scientific Discovery through Advanced Computing (SciDAC) Institute “LEADS: LEarning-Accelerated Domain Science” and by the Ralph O’Connor Sustainable Energy Institute (ROSEI) at Johns Hopkins University.

$^{*}$Equal Contribution. $^{\dagger}$Corresponding Author: Liang Wu (\texttt{wliang14@jh.edu}). $^{1}$Johns Hopkins University, MD 21218, USA. $^{2}$Beijing Information Science and Technology University, Beijing 102206, CHN. $^{3}$Southeast University, Nanjing 211189, CHN.}
}

\maketitle

\begin{abstract}
Data-enabled predictive control (DeePC) replaces the explicit system model of model predictive control with a data-driven predictor built directly from recorded input-output trajectories, but the original hard-equality formulation keeps its coordinate vector as an online decision variable whose size grows with the offline data length, so the online problem grows more expensive as more data is used to enrich the behavioral representation. This article proposes Fast-DeePC, a framework which relaxes this consistency relation into the control objective with a quadratic ridge on the coordinate vector in place of the customary sparsity-promoting penalty, admitting a closed-form, offline elimination of the coordinate vector and reducing the online problem to a fixed-size box-constrained quadratic program solved by a structure-exploiting interior-point method. On identical soft quadratically-regularized problems, the result agrees with a general-purpose solver’s solution and is up to two orders of magnitude faster.
\end{abstract}

\begin{IEEEkeywords}
Data-enabled predictive control, quadratic programming, interior-point method
\end{IEEEkeywords}

\section{Introduction}
\label{sec:introduction}

\IEEEPARstart{C}{lassical} control practice like model predictive control (MPC), has traditionally been built on the explicit mathematical description of the plant~\cite{rawlings2020model}. Its effectiveness depends on whether a compact, sufficiently accurate model can be learned from data. However, for many modern engineering systems, building such a model is often the most challenging part. Strong nonlinearities, large state dimensions, and the high cost of dedicated system identification experiments can make obtaining an accurate parametric model considerably more difficult than exploiting the abundant operational data that are already routinely collected \cite{hjalmarsson2005experiment,ogunnaike1995contemporary}. This has motivated an alternative line of control design to utilize measured trajectories directly to compute a control action. 

Recently, data-enabled predictive control (DeePC) \cite{coulson2019data} has attracted increasing attention as a model-free predictive control framework. DeePC computes optimal control actions directly from measured input--output data without explicitly identifying a parametric system model. Built upon Willems' Fundamental Lemma from behavioral systems theory \cite{willems1986time,willems2005note,markovsky2006exact}, it represents future trajectories as linear combinations of previously collected data through the column space of a Hankel matrix constructed from a persistently exciting input--output trajectory. By bypassing the model identification stage while retaining the predictive control formulation, DeePC is particularly attractive when accurate system models are difficult to obtain. Under noise-free measurements and a persistently exciting experiment, this behavioral formulation is provably equivalent to model-based MPC with an exact state estimate \cite{coulson2019data,markovsky2008data}, thereby eliminating the need for explicit state-space realization while retaining the optimization structure of MPC.

\subsection{Related Work}
A large body of research has focused on improving the robustness and theoretical properties of DeePC while retaining its behavioral formulation. The original regularized DeePC relaxes the behavioral consistency constraints through an $\ell_1$  regularization on the behavioral coefficients together with an output slack variable, later interpreted from a distributionally robust optimization perspective \cite{coulson2019regularized,coulson2021distributionally}. Ref. \cite{huang2021quadratic} proposed a \emph{quadratic} (namely $\ell_2$) regularization motivated by measurement disturbances, and further extended this framework to robust min--max formulations and kernel-based nonlinear DeePC \cite{huang2023robust}. Ref.  \cite{dorfler2022bridging} interpreted quadratic regularization as an implicit system identification step within a convex relaxation of indirect identify-then-control approaches. Other works retain the exact behavioral constraints while incorporating terminal ingredients and Lyapunov arguments to establish recursive feasibility, stability, and robustness guarantees \cite{berberich2020data,berberich2020robust}.

Despite its conceptual elegance and strong theoretical guarantees, the computational complexity of DeePC grows with the amount of stored data, since future trajectories are parameterized directly by previously collected data rather than by a fixed-order system model. As more data are incorporated to improve the behavioral representation, the dimension of the resulting online optimization problem increases accordingly. In contrast, model-based MPC compresses the available data into a compact model offline, making its online computational complexity largely independent of the size of the identification dataset. Moreover, DeePC requires the behavioral consistency constraints to be enforced during every online optimization, further increasing the computational burden.

Only a few recent works explicitly target this computational bottleneck. Deep DeePC \cite{zhang2025deep} replaces the online quadratic program (QP) with a neural network policy learned offline, significantly reducing online computational cost at the expense of the convex optimization structure and formal guarantees on feasibility and optimality. RDS-DeePC \cite{li2025rds} instead reduces the effective Hankel dimension by selecting representative trajectory segments using an influence-function-based sensitivity criterion. Ref. \cite{shi2024efficient} proposed an efficient low-dimensional DeePC variant via the use of singular value decomposition. Similarly, Ref. \cite{faye2024computationally} also proposed a computationally efficient reformulation of DeePC via range space equivalence, which only involves the replacement of two matrices, but relies on problems of much smaller dimension.

Predictor-based data-driven control methods are used for fixed-dimensional online optimization. SPC~\cite{favoreel1999spc} projects future outputs onto the row span of past input--output data and future inputs, leaving the future input as the online decision variable, and $\gamma$-DDPC~\cite{breschi2023data} generalizes this via an LQ decomposition that separates initial-condition fitting from future-performance optimization. Both rely on first constructing a projected predictor or an LQ-based coordinate decomposition, and \cite{breschi2023data} further shows that regularizing the full behavioral coordinate requires careful tuning.But none of these reductions eliminates the behavioral coordinate from an \emph{exact} soft-constrained DeePC objective while guaranteeing a unique solution independent of data rank.

% Unlike these approximation- or data-reduction-based approaches, the proposed method fundamentally removes the data-dependent behavioral parameterization from the soft-constrained DeePC formulation through an exact analytical elimination. Consequently, the size of the online optimization problem becomes independent of the amount of stored data, without resorting to approximation, learning, or trajectory selection.

% In contrast to both, this article targets the coupling not by approximating $g$ with a learned surrogate or by shrinking $T$ offline, but by eliminating $g$ \emph{exactly} from a soft-constrained DeePC objective, so that the online problem size becomes scalable by construction rather than by approximation or learning. 

% Despite these advances, the underlying behavioral parameterization remains unchanged. Consequently, the online optimization problem continues to scale with the amount of stored data, since the behavioral representation is still parameterized directly by the Hankel matrix. 

\subsection{Contribution}
 
% This article proposes \emph{Fast-DeePC}, a soft-constrained scheme of DeePC with a dedicated solver. The proposed method makes the following contributions:(i) The behavioral consistency relation is relaxed into the objective through quadratic penalties and a ridge, allowing the coordinate vector to be eliminated exactly by partial minimization and reducing the online problem to a fixed-dimensional QP. We ensure that (ii)~The reduced Hessian is positive definite for any recorded data, so the online problem always admits a unique minimizer. (iii)~The reduced problem is solved by a feasible Mehrotra predictor--corrector interior-point method implemented in C, which, on three different  benchmarks, matches a general-purpose solver on the identical soft problem while reducing online solve time by up to two orders of magnitude.

This article presents \emph{Fast-DeePC}, a soft-constrained DeePC formulation equipped with a dedicated solver. The main contributions are threefold:
(i)~We relax the behavioral consistency relation through quadratic penalties and introduce ridge regularization, enabling exact elimination of the behavioral coordinate vector by partial minimization. This reduces the online optimization problem to a fixed-dimensional quadratic program (QP).
(ii)~We establish that the reduced Hessian is positive definite for arbitrary recorded data, guaranteeing a unique minimizer whenever the online problem is feasible.
(iii)~We develop a C implementation of a feasible Mehrotra predictor--corrector interior-point method tailored to the reduced QP. Across three benchmarks, the proposed solver achieves solution quality comparable to that of a general-purpose solver on the same soft-constrained problem, while reducing online solve time by up to two orders of magnitude.

\section{Preliminaries: DeePC}
\label{sec:preliminaries}

\subsection{Behavioral System Representation}

Consider a discrete-time LTI system
\begin{equation}
x_{t+1}=Ax_t+Bu_t,\qquad y_t=Cx_t+Du_t,
\label{eq:lti}
\end{equation}
with $A\in\mathbb{R}^{n\times n}$, $B\in\mathbb{R}^{n\times m}$, $C\in\mathbb{R}^{p\times n}$, $D\in\mathbb{R}^{p\times m}$, state $x_t\in\mathbb{R}^n$, input $u_t\in\mathbb{R}^m$, and output $y_t\in\mathbb{R}^p$. For an integer $k$, stacking the output map $C$ together with its successive shifts along the state trajectory, $\mathcal{O}_k(A,C):=\operatorname{col}(C,CA,\dots,CA^{k-1})$, gives a matrix whose rank determines how many consecutive output samples suffice to recover the state uniquely. The \emph{lag} $\nu$ of \eqref{eq:lti} is the smallest such $k$ for which $\mathcal{O}_k(A,C)$ attains full column rank $n$.

\begin{definition}[Persistency of excitation]
\label{def:pe}
For $w=\operatorname{col}(w_0,\dots,w_{T-1})$ with $w_i\in\mathbb{R}^q$,
the Hankel matrix of depth $L$ is
\begin{equation}
\mathcal{H}_L(w):=
\begin{bmatrix}
w_0 & w_1 & \cdots & w_{T-L}\\
\vdots & \vdots & & \vdots\\
w_{L-1} & w_L & \cdots & w_{T-1}
\end{bmatrix}
\in\mathbb{R}^{qL\times(T-L+1)},
\label{eq:hankel-def}
\end{equation}
and $w$ is \emph{persistently exciting} of order $L$ if
$\mathcal{H}_L(w)$ has full row rank.
\end{definition}

\begin{lemma}[Fundamental lemma\cite{willems2005note}]
\label{lem:fundamental}
Let \eqref{eq:lti} be controllable and let $u^{d}=\operatorname{col}(u^{d}_0,\dots,u^{d}_{T-1})\in\mathbb{R}^{mT}$ be persistently exciting of order $L+n$, with $y^{d}\in\mathbb{R}^{pT}$ the corresponding output sequence. Then $\operatorname{col}(u^s,y^s)$ is a length-$L$ trajectory of \eqref{eq:lti} if and only if there exists $g\in\mathbb{R}^{T-L+1}$ with
\begin{equation}
\begin{bmatrix}\mathcal{H}_L(u^{d})\\ \mathcal{H}_L(y^{d})\end{bmatrix}g
=\begin{bmatrix}u^s\\ y^s\end{bmatrix}.
\label{eq:fundamental}
\end{equation}
\end{lemma}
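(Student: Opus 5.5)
We prove the two directions separately. The ``if'' direction is immediate from the Hankel structure. Each column of $\operatorname{col}(\mathcal{H}_L(u^d),\mathcal{H}_L(y^d))$ is a length-$L$ window $\operatorname{col}(u^d_{j},\dots,u^d_{j+L-1},y^d_j,\dots,y^d_{j+L-1})$ of the recorded trajectory. Each window is therefore a length-$L$ trajectory of \eqref{eq:lti}, generated from initial state $x^d_j$. Because \eqref{eq:lti} is linear, any linear combination of trajectories is again a trajectory, generated from the combined initial state $\sum_j g_j x^d_j$. Hence every vector of the form \eqref{eq:fundamental} is a length-$L$ trajectory.

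For the ``only if'' direction, first describe the set of length-$L$ trajectories explicitly. Writing $\mathcal{T}_L$ for the block-Toeplitz matrix of Markov parameters, every such trajectory has the form
\begin{equation*}
\begin{bmatrix}u^s\\ y^s\end{bmatrix}=\begin{bmatrix}0 & I\\ \mathcal{O}_L(A,C) & \mathcal{T}_L\end{bmatrix}\begin{bmatrix}x_0\\ u^s\end{bmatrix}.
\end{equation*}
So the trajectory space is the image of a linear map on $\mathbb{R}^{n+mL}$, and its dimension is at most $n+mL$. Applying the same map column by column gives the factorization
\begin{equation*}
\begin{bmatrix}\mathcal{H}_L(u^d)\\ \mathcal{H}_L(y^d)\end{bmatrix}=M\begin{bmatrix}X\\ \mathcal{H}_L(u^d)\end{bmatrix},\qquad X=[x^d_0,\dots,x^d_{T-L}],
\end{equation*}
where $M$ is the matrix displayed above. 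It therefore suffices to show that $\operatorname{col}(X,\mathcal{H}_L(u^d))$ has full row rank $n+mL$. Then the image of the data matrix equals the image of $M$, which is the whole trajectory space, and the required $g$ exists.

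The main obstacle is this rank statement, and it is where controllability and persistency of excitation of order $L+n$ enter. The plan is the classical contradiction argument of Willems et al. Suppose a nonzero row vector $(\xi,\eta)$, with $\xi\in\mathbb{R}^{1\times n}$ and $\eta=(\eta_0,\dots,\eta_{L-1})$, annihilates $\operatorname{col}(X,\mathcal{H}_L(u^d))$. Using the state recursion $x^d_{j+1}=Ax^d_j+Bu^d_j$, construct $n+1$ shifted annihilators of the deeper matrix $\operatorname{col}(X,\mathcal{H}_{L+n}(u^d))$, namely $(\xi,\eta,0,\dots,0)$, $(\xi A,\xi B,\eta,0,\dots)$, and so on, shifting by one block each time. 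Next, use Cayley--Hamilton: the coefficients $\alpha_i$ of the characteristic polynomial satisfy $\sum_i\alpha_i\xi A^i=0$. A suitable combination of the shifted annihilators therefore kills the $X$ block, and what remains is a vector annihilating $\mathcal{H}_{L+n}(u^d)$ alone. Persistency of excitation of order $L+n$ forces this vector to be zero. That condition yields $\eta=0$ together with $\xi A^iB=0$ for all $i$, so controllability gives $\xi=0$, which is a contradiction. The delicate bookkeeping is checking that the combined vector is nonzero whenever $(\xi,\eta)\neq0$: one examines its leading nonzero block, where the monic coefficient $\alpha_n=1$ multiplies $\eta$.

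Once this rank condition holds, the two directions together establish the equivalence in \eqref{eq:fundamental}. As a by-product, the trajectory space has dimension exactly $mL+n$ when $L\ge\nu$.
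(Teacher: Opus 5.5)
Your proof is correct. The paper does not prove this lemma; it quotes it from Willems et al., so there is no in-paper argument to match. What you give is the standard self-contained state-space proof: factor the data Hankel matrix through $\operatorname{col}(X,\mathcal{H}_L(u^d))$, then get full row rank of that matrix by shifting a left annihilator into the depth-$(L+n)$ matrix and removing the state block with Cayley--Hamilton. It uses exactly the stated hypotheses, and observability is never needed.

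A few points are worth stating explicitly when you write it out.

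First, the deeper matrix must use the truncated state matrix $[x^d_0,\dots,x^d_{T-L-n}]$. The $k$-th shifted relation invokes the original one at column $j+k\le T-L$, so $X$ cannot keep all $T-L+1$ columns there. Persistency of excitation of order $L+n$ guarantees at least $m(L+n)\ge1$ such columns.

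Second, the final step is a two-stage unit-triangular back-substitution.
\begin{itemize}
\item The last $L$ blocks of the combined vector involve only $\eta$. Only the $\alpha_n=1$ shift reaches block $n+L-1$, and in general block $n+i$ contains $\eta_i$ with coefficient $1$ plus multiples of $\eta_{i'}$ for $i'>i$. This forces $\eta=0$.
\item With $\eta=0$, the first $n$ blocks are unit-triangular in $\xi B,\xi AB,\dots,\xi A^{n-1}B$, so all of these vanish.
\end{itemize}

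Third, your ``leading nonzero block'' remark only covers $\eta\neq0$. The case $\eta=0$, $\xi\neq0$ is exactly where controllability is used, as you already say in the preceding sentence.
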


Lemma~\ref{lem:fundamental} identifies the subspace of all valid length-$L$ trajectories with the range of the Hankel matrix: a single sufficiently rich trajectory is itself a non-parametric predictive model.

\subsection{Data Collection and the Consistency Equation}

Fix a prediction horizon $N\in\mathbb{Z}_{>0}$ and an initialization horizon $T_{\mathrm{ini}}\in\mathbb{Z}_{>0}$, and let $T$ satisfy $T\ge(m+1)(T_{\mathrm{ini}}+N+n)-1$, a necessary condition for persistency of excitation of order $T_{\mathrm{ini}}+N+n$. Recording $u^{d},y^{d}$ once offline and partitioning the Hankel matrices into
\emph{past} and \emph{future} block rows gives
\begin{equation}
\begin{bmatrix} U_p\\ U_f \end{bmatrix}
:= \mathcal{H}_{T_{\mathrm{ini}}+N}(u^{d}),\qquad
\begin{bmatrix} Y_p\\ Y_f \end{bmatrix}
:= \mathcal{H}_{T_{\mathrm{ini}}+N}(y^{d}),
\label{eq:hankel-partition}
\end{equation}
with $U_p\in\mathbb{R}^{mT_{\mathrm{ini}}\times n_g}$, $U_f\in\mathbb{R}^{mN\times n_g}$ (similarly $Y_p,Y_f$) and $n_g:=T-T_{\mathrm{ini}}-N+1$ Hankel columns. Each column collects the past signal of length $T_{\mathrm{ini}}$ and the future signal of length $N$ within one recorded trajectory.

At time $t$, let $u_{\mathrm{ini}}=\operatorname{col}(u_{t-T_{\mathrm{ini}}},\dots,u_{t-1}) \in\mathbb{R}^{mT_{\mathrm{ini}}}$ denote the most recent measured inputs, $u=\operatorname{col}(u_t,\dots,u_{t+N-1})\in\mathbb{R}^{Nm}$ the future input sequence, and $y_{\mathrm{ini}}\in\mathbb{R}^{pT_{\mathrm{ini}}}$, $y\in\mathbb{R}^{Np}$ their output counterparts; $r=\operatorname{col}(r_t,\dots,r_{t+N-1})\in\mathbb{R}^{Np}$ is the stacked reference. By Lemma~\ref{lem:fundamental} with $L=T_{\mathrm{ini}}+N$, the signal $\operatorname{col}(u_{\mathrm{ini}},u,y_{\mathrm{ini}},y)$ is admissible if and only if there exists a \emph{coordinate vector} $g\in\mathbb{R}^{n_g}$ with
\begin{equation}
\begin{bmatrix} U_p\\ Y_p\\ U_f\\ Y_f \end{bmatrix} g
=
\begin{bmatrix}
u_{\mathrm{ini}}\\ y_{\mathrm{ini}}\\ u\\ y
\end{bmatrix}.
\label{eq:consistency}
\end{equation}
Provided $T_{\mathrm{ini}}\ge\nu$, the measured window $\xi_t:=\operatorname{col}(u_{\mathrm{ini}},y_{\mathrm{ini}})$ fixes a unique underlying state and $y$ is uniquely determined for every candidate future input $u$. The past block rows enforce agreement with the recent input--output window, the future-input block row fixes the candidate input sequence, and the future output is then read as $y=Y_fg$, without explicitly constructing a state estimate~\cite{coulson2019data,markovsky2008data}.

\subsection{The DeePC Algorithm}

Given constraint sets $\mathcal{U}\subseteq\mathbb{R}^m$, $\mathcal{Y}\subseteq\mathbb{R}^p$ and performance matrices $Q\in\mathbb{R}^{p\times p}$, $R\in\mathbb{R}^{m\times m}$, DeePC solves
\begin{equation}
\begin{aligned}
\underset{g,u,y}{\text{minimize}}\ \ &
\sum_{k=0}^{N-1}\bigl(\|y_k-r_{t+k}\|_Q^2+\|u_k\|_R^2\bigr)\\
\text{subject to}\ \ &
\eqref{eq:consistency},\quad
u_k\in\mathcal{U},\ y_k\in\mathcal{Y},\ k=0,\dots,N-1,
\end{aligned}
\label{eq:deepc-qp}
\end{equation}
in a receding-horizon manner: solve for $g^\star$, apply the first $s$ input samples $u_0^\star,\dots,u_{s-1}^\star$ of $u^\star=U_fg^\star$, update $\xi_t$, and repeat with $t\leftarrow t+s$; as in MPC, the shift $s\in\{1,\dots,N\}$ is a design parameter, taken as $s=1$ here. Note that $u$ and $y$ are not free decision variables, since \eqref{eq:consistency} pins them to $U_fg$ and $Y_fg$; the model and state estimate of MPC are replaced entirely by data samples~\cite{coulson2019data}.

Problem \eqref{eq:deepc-qp} therefore achieves identification, estimation, and prediction through a single linear equation, at the price of a QP whose decision variable $g$ has dimension $n_g$, growing with the offline data length $T$. Enriching the behavioral representation with additional data thus directly inflates the online problem. The scheme developed next removes this coupling by relaxing \eqref{eq:consistency} into the objective and eliminating $g$ offline.

\section{Method}
\label{sec:method}

We develop a Soft-constrained Scheme that moves the $T$-dependent computation associated with $g$ to the offline phase. The resulting online problem is a dense box-constrained QP in $z=\operatorname{col}(u,y)\in\mathbb{R}^{N(m+p)}$.

Let $N\in\mathbb{Z}_{>0}$ be the prediction horizon and $T_{\mathrm{ini}}\in\mathbb{Z}_{>0}$ the initialization horizon. The stacked
future trajectories $u$, $y$, and $r$, and the Hankel blocks $U_p,Y_p,U_f,Y_f$, are defined as in Section~\ref{sec:preliminaries}. Stacking the per-step cost matrices $Q$ and $R$ from
\eqref{eq:deepc-qp} gives
\begin{equation}
\begin{alignedat}{2}
W_y &:= \operatorname{blkdiag}(Q,\ldots,Q)\succeq0,
&\qquad W_y &\in \mathbb{R}^{Np\times Np},\\
W_u &:= \operatorname{blkdiag}(R,\ldots,R)\succeq0,
&\qquad W_u &\in \mathbb{R}^{Nm\times Nm}.
\end{alignedat}
\end{equation}
writes the tracking and effort cost of \eqref{eq:deepc-qp} compactly as $\tfrac12(y-r)^\top W_y(y-r)+\tfrac12 u^\top W_u u$.

To group the four blocks of \eqref{eq:consistency} into the part fixed by the current measurement and the part carrying the decision $u$, define the past-and-input regressor and its target
\begin{equation}\label{eq:Z-wt}
Z:=\begin{bmatrix} U_p\\ Y_p\\ U_f\end{bmatrix},
~w_t(u):=\begin{bmatrix} u_{\mathrm{ini}}\\ y_{\mathrm{ini}}\\ u\end{bmatrix}
       =\underbrace{\begin{bmatrix} u_{\mathrm{ini}}\\ y_{\mathrm{ini}}\\ 0\end{bmatrix}}_{=:\,w_{\mathrm{ini},t}}
        +\begin{bmatrix} 0\\ 0\\ I\end{bmatrix} u ,
\end{equation}
so that $w_t(u)$ is affine in $u$ and $w_{\mathrm{ini},t}$ depends only on the measured window $(u_{\mathrm{ini}},y_{\mathrm{ini}})$. The future-output block $Y_f$ is kept separate as the quantity to be predicted. Note that the four behavioral relations of \eqref{eq:consistency} are not penalized independently: three of them ($U_pg=u_{\mathrm{ini}}$, $Y_pg=y_{\mathrm{ini}}$, $U_fg=u$) are grouped into the single $\lambda$-weighted consistency term through the stacked regressor $Z$; only the fourth, output-prediction relation ($Y_fg=y$) is penalized separately, with weight $\rho$. It is also convenient to isolate the purely-measured part of the regressor,
\begin{equation}\label{eq:Zp-wp}
Z_{\mathrm p}:=\begin{bmatrix} U_p\\ Y_p\end{bmatrix},
\quad
w_{\mathrm p,t}:=\begin{bmatrix} u_{\mathrm{ini}}\\ y_{\mathrm{ini}}\end{bmatrix},
\end{equation}
so that $w_{\mathrm{ini},t}=\col(w_{\mathrm p,t},0)$; this form is used
below to construct the online linear term.

\subsubsection{The penalized program and offline elimination of $g$}

We relax the hard consistency constraint \eqref{eq:consistency} into the objective, penalizing the past-and-input mismatch $Zg-w_t(u)$ with weight $\lambda$, the output mismatch $Y_fg-y$ with weight $\rho$, and adding the ridge $\lambda_g\|g\|_2^2$ in place of the $\ell_1$ term of the regularized DeePC \cite[Eq.~(8)]{coulson2019data}, yielding
\begin{equation}
\begin{aligned}
\min_{g,u,y}\ \
& \tfrac12(y-r)^\top W_y(y-r)+\tfrac12 u^\top W_u u
  +\tfrac{\rho}{2}\,\|Y_f g - y\|_2^2 \\
& \qquad\qquad
  +\tfrac{\lambda}{2}\,\|Z g - w_t(u)\|_2^2
  +\tfrac{\lambda_g}{2}\,\|g\|_2^2 \\
\text{s.t.}\ \
& u\in\mathcal{U}^N,\quad y\in\mathcal{Y}^N .
\end{aligned}
\label{eq:penalized}
\end{equation}
Here $\lambda>0$ weights data consistency, $\rho>0$ weights the output prediction, and $\lambda_g>0$ is a ridge on $g$: larger $\lambda,\rho$ enforce consistency with the recorded data more tightly, while smaller values favor tracking and constraint satisfaction.

\begin{lemma}[Offline elimination of $g$]
\label{lem:elim}
Let $\lambda,\rho,\lambda_g>0$ and define
\begin{equation}
M:=\lambda\,Z^\top Z+\rho\,Y_f^\top Y_f+\lambda_g I .
\label{eq:M}
\end{equation}
Then $M\succ0$, and for any fixed $(u,y)$ the inner problem
$\min_g\{\tfrac{\rho}{2}\|Y_fg-y\|_2^2+\tfrac{\lambda}{2}\|Zg-w_t(u)\|_2^2
+\tfrac{\lambda_g}{2}\|g\|_2^2\}$ has the unique minimizer
\begin{equation}
g^\star(u,y)=M^{-1}\big(\lambda\,Z^\top w_t(u)+\rho\,Y_f^\top y\big).
\label{eq:gstar}
\end{equation}
The matrix $M$ (equivalently its Cholesky factor) depends only on the offline
data and the fixed weights and is computed once, before control begins.
\end{lemma}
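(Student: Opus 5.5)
The plan is to treat the inner problem as an unconstrained strictly convex quadratic in $g$ and obtain its minimizer from the first-order optimality condition.

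\emph{Positive definiteness of $M$.} For any $v\in\mathbb{R}^{n_g}$ we have
\begin{equation*}
v^\top M v=\lambda\|Zv\|_2^2+\rho\|Y_fv\|_2^2+\lambda_g\|v\|_2^2\ \ge\ \lambda_g\|v\|_2^2 .
\end{equation*}
The first two terms are nonnegative, and the last is strictly positive whenever $v\neq0$ because $\lambda_g>0$. Hence $M\succ0$, with $\lambda_{\min}(M)\ge\lambda_g$. This holds regardless of the rank of $Z$ or $Y_f$, so no persistency-of-excitation assumption from Lemma~\ref{lem:fundamental} is needed. This independence from the data is the point worth stressing.

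\emph{Form of the inner objective.} Fix $(u,y)$, which makes $w_t(u)$ a fixed vector. Expanding the three squared norms gives
\begin{equation*}
\phi(g)=\tfrac12 g^\top M g-g^\top\big(\lambda Z^\top w_t(u)+\rho Y_f^\top y\big)+c,
\end{equation*}
where $c$ does not depend on $g$. Since $\nabla^2\phi=M\succ0$, the function $\phi$ is strictly convex and coercive. It therefore has exactly one minimizer, namely the unique stationary point. Setting $\nabla\phi(g)=Mg-\lambda Z^\top w_t(u)-\rho Y_f^\top y=0$ and inverting $M$, which is invertible since $M\succ0$, yields \eqref{eq:gstar}.

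\emph{Offline computability.} By \eqref{eq:M}, $M$ is built only from $Z=\operatorname{col}(U_p,Y_p,U_f)$, from $Y_f$, and from the scalars $\lambda,\rho,\lambda_g$. All of these are fixed once the data $(u^d,y^d)$ and the weights are chosen. Because $M\succ0$, its Cholesky factorization exists and is unique. It can therefore be computed once before control begins, and online evaluation of $g^\star$ only requires forming the right-hand side.

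None of these steps is a real obstacle. The only care needed is in the expansion of $\phi$: the cross terms $-\lambda g^\top Z^\top w_t(u)$ and $-\rho g^\top Y_f^\top y$ must be collected with the correct factor, since the $\tfrac12$ weights cancel the factor $2$ from the cross terms.
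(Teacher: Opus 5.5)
Your proposal is correct and follows essentially the same route as the paper: the bound $M\succeq\lambda_g I\succ0$ holds independently of the data rank, the inner objective is a strongly convex quadratic, and setting its gradient to zero gives the normal equations $Mg=\lambda Z^\top w_t(u)+\rho Y_f^\top y$, whose unique solution is \eqref{eq:gstar}. Your explicit expansion of the objective and the remark on the unique Cholesky factor only add detail to the paper's shorter argument.
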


\begin{proof}
$Z^\top Z\succeq0$ and $Y_f^\top Y_f\succeq0$, so $M\succeq\lambda_g I\succ0$ for
$\lambda_g>0$. The inner objective is a strongly convex quadratic in $g$;
equating its gradient to zero gives the normal equations
$Mg=\lambda Z^\top w_t(u)+\rho Y_f^\top y$, whose unique solution is
\eqref{eq:gstar}. The matrix $M$ contains no online quantities.
\end{proof}

Using the affine structure $w_t(u)=w_{\mathrm{ini},t}+\col(0,0,u)$ from
\eqref{eq:Z-wt} and the fact that the future-input block of $Z$ is $U_f$, we have
\begin{equation}
\lambda\,Z^\top w_t(u)=c_t+\lambda\,U_f^\top u,
\quad
c_t:=\lambda\,Z^\top w_{\mathrm{ini},t},
\label{eq:ct}
\end{equation}
where $c_t\in\mathbb{R}^{n_g}$ is the \emph{past-data forcing vector}: it
depends only on the measured window $w_{\mathrm{ini},t}$ and, at this stage,
appears to be a per-step quantity requiring a matrix--vector product against
the full regressor $Z$. As shown below, $c_t$ and the $n_g$-dimensional solve
it feeds can be removed entirely from the online phase. Substituting
\eqref{eq:ct} into \eqref{eq:gstar},
\begin{equation}
g^\star(u,y)=M^{-1}\big(c_t+\lambda\,U_f^\top u+\rho\,Y_f^\top y\big).
\label{eq:gstar-ct}
\end{equation}

\begin{remark}[Quadratic vs.\ $\ell_1$ regularization]
\label{rem:l1l2}
The regularized DeePC formulation of \cite[Eq.~(8)]{coulson2019data} uses
the $\ell_1$ penalty $\lambda_g\|g\|_1$. Although $\ell_1$-regularized
least squares can itself be reformulated as a QP, this reformulation
retains $g$ (or its split positive/negative parts) as an online decision
variable and admits no closed-form elimination analogous to
\eqref{eq:gstar-ct}; it is the absence of this elimination, not the
non-smoothness of $\|g\|_1$ per se, that prevents removing $g$ from the
online problem. Replacing this term by the quadratic ridge
$\tfrac{\lambda_g}{2}\|g\|_2^2$ instead yields \eqref{eq:gstar-ct}, which
enables the offline elimination of $g$ and the reduced BoxQP. Since
$M\succ0$ for $\lambda,\rho,\lambda_g>0$, $g^\star$ varies smoothly with
these positive weights. Quadratic-regularized DeePC likewise admits a
closed-form solution~\cite{huang2021quadratic}. The $\ell_1$-regularized
hard formulation and the proposed quadratic soft formulation are different
optimization problems, so they can produce different closed-loop
trajectories, as illustrated in Section~\ref{sec:results}.
\end{remark}

\subsection{Substituting $g^\star$ back}

Inserting $g^\star(u,y)$ from \eqref{eq:gstar-ct} into the objective of
\eqref{eq:penalized} and discarding terms independent of $(u,y)$ produces a cost
in the physical variables alone. Writing $z:=\col(u,y)\in\mathbb{R}^{N(m+p)}$,
the reduced cost is the quadratic
\begin{equation}
J_{\mathrm{red}}(z)=\tfrac12\,z^\top\widehat H\,z+\widehat h_t^\top z+\mathrm{const},
\label{eq:Jred}
\end{equation}
with symmetric block Hessian and linear term
\begin{equation}
\widehat H=
\begin{bmatrix}\widehat H_{uu} & \widehat H_{uy}\\[2pt]
              \widehat H_{uy}^\top & \widehat H_{yy}\end{bmatrix},
\quad
\widehat h_t=
\begin{bmatrix}\widehat h_u\\[2pt]\widehat h_y\end{bmatrix},
\label{eq:Hhat-hhat}
\end{equation}
whose blocks follow directly from \eqref{eq:gstar-ct}:
\begin{align}
\widehat H_{uu}&=W_u+\lambda I-\lambda^2\,U_f M^{-1}U_f^\top,
\label{eq:Huu}\\
\widehat H_{yy}&=W_y+\rho I-\rho^2\,Y_f M^{-1}Y_f^\top,
\label{eq:Hyy}\\
\widehat H_{uy}&=-\lambda\rho\,U_f M^{-1}Y_f^\top,
\label{eq:Huy}\\
\widehat h_u&=-\lambda\,U_f M^{-1}c_t,
\label{eq:hu}\\
\widehat h_y&=-W_y r-\rho\,Y_f M^{-1}c_t.
\label{eq:hy}
\end{align}
The terms $+\lambda I$ and $+\rho I$ arise from the squared targets
$\tfrac{\lambda}{2}\|w_t(u)\|^2$ and $\tfrac{\rho}{2}\|y\|^2$, while the negative
terms $-\lambda^2 U_fM^{-1}U_f^\top$, $-\rho^2 Y_fM^{-1}Y_f^\top$, and
$-\lambda\rho\,U_fM^{-1}Y_f^\top$ are the data-driven curvature contributed by
the eliminated coordinate $g$.

\subsubsection{$T$-independent online construction via a Woodbury reformulation}
\label{sec:online-elim}

Although $\widehat H$ is constant and $\widehat h_t$ is updated online, their direct construction involves $M\in\mathbb{R}^{n_g\times n_g}$, whose dimension grows with the offline data length $T$. We therefore derive an algebraically equivalent construction whose factorization dimension is independent of $T$. This changes only the offline construction and leaves the soft quadratic DeePC problem \eqref{eq:penalized} and its optimizer unchanged.

Define $f_t:=w_{\mathrm p,t}=\operatorname{col}(u_{\mathrm{ini}},y_{\mathrm{ini}})$
and $z:=\operatorname{col}(u,y)$, and collect the Hankel blocks and weights as
\begin{equation}
\begin{aligned}
A_f&:=\begin{bmatrix}U_p\\Y_p\end{bmatrix},\quad
A_z:=\begin{bmatrix}U_f\\Y_f\end{bmatrix},\quad
A:=\begin{bmatrix}A_f\\A_z\end{bmatrix},\\
\Sigma&:=\operatorname{blkdiag}\bigl(\lambda I_{T_{\mathrm{ini}}(m+p)},\lambda I_{Nm},\rho I_{Np}\bigr).
\end{aligned}
\label{eq:woodbury-A}
\end{equation}
The $g$-dependent terms of \eqref{eq:penalized} are then
$\tfrac12\|Ag-\operatorname{col}(f_t,z)\|_\Sigma^2+\tfrac{\lambda_g}{2}\|g\|_2^2$. By the Woodbury identity, minimum over $g$ is
$\tfrac12\operatorname{col}(f_t,z)^\top K\operatorname{col}(f_t,z)$ with
\begin{equation}
K:=\Bigl(\Sigma^{-1}+\tfrac{1}{\lambda_g}AA^\top\Bigr)^{-1}
=\begin{bmatrix}K_{ff}&K_{fz}\\K_{zf}&K_{zz}\end{bmatrix}.
\label{eq:woodbury-K}
\end{equation}
With $H_0:=\operatorname{blkdiag}(W_u,W_y)$ and
$h_{0,t}:=\operatorname{col}(0,-W_yr_t)$ from the tracking cost, expanding in
$z$ gives an alternative to \eqref{eq:Huu}--\eqref{eq:hy}:
\begin{equation}
\widehat H=H_0+K_{zz},
\qquad
\widehat h_t=K_{zf}f_t+h_{0,t}.
\label{eq:woodbury-reduced}
\end{equation}

The implementation never forms $K$. Offline it factors
$G:=\Sigma^{-1}+\lambda_g^{-1}AA^\top$ and solves
$G\,\operatorname{col}(X_f,X_z)=\operatorname{col}(0,I)$; since this recovers
the $z$-columns of $K$, symmetry gives $K_{zf}=X_f^\top$, $K_{zz}=X_z$, and
only $\widehat H$ and $K_{zf}$ are stored. At time $t$,
\begin{equation}
\widehat h_t=K_{zf}f_t+\operatorname{col}(0,-W_yr_t),
\label{eq:hu-hy-fast}
\end{equation}
The dimensions of $G$, $\widehat H$, and $K_{zf}$ depend only on $N,T_{\mathrm{ini}},m,p$, not on $T$. Therefore the online phase never solves an $n_g$-dimensional system nor reconstructs $g^\star$. Measured exogenous inputs are handled by appending the corresponding signal and Hankel rows to $f_t$ and $A_f$.

\begin{remark}[Online fixed dimensions]
\label{rem:scalable}
The matrix $G$, the reduced Hessian $\widehat H\in\mathbb{R}^{N(m+p)\times N(m+p)}$, and the map $K_{zf}\in\mathbb{R}^{N(m+p)\times T_{\mathrm{ini}}(m+p)}$ all have dimensions independent of the offline data length $T$. Hence, for fixed $N$, $m$, $p$, and $T_{\mathrm{ini}}$, all online matrix--vector products have $T$-independent dimensions. The $T$-dependent construction is moved entirely to the offline phase.
\end{remark}

Rescaling the box sets $\mathcal U^N,\mathcal Y^N$ to$[-\mathbf 1,\mathbf 1]$ by the affine change of variables of~\cite{wu2025direct}, the relaxed DeePC program\eqref{eq:penalized} therefore reduces to the box-constrained quadratic program
\begin{equation}
\min_{z}\ \tfrac12\,z^\top\widehat H\,z+\widehat h_t^\top z
\quad\text{s.t.}\quad -\mathbf 1\le z\le\mathbf 1 .
\label{eq:boxqp}
\end{equation}

Introducing the upper- and lower-bound multipliers $\gamma,\theta\ge0$ and the corresponding slacks $\phi,\psi\ge0$ (so that $\phi=\mathbf 1-z$ is the upper-bound slack paired with $\gamma$, and $\psi=z+\mathbf 1$ is the lower-bound slack paired with $\theta$), the Karush--Kuhn--Tucker (KKT)
conditions of \eqref{eq:boxqp} are
\begin{equation}
\begin{aligned}
&\widehat H z+\widehat h_t+\gamma-\theta=0,~z+\phi-\mathbf 1=0,~z-\psi+\mathbf 1=0,\\
&\gamma,\theta,\phi,\psi\ge0,
~\gamma\odot\phi=0,
~\theta\odot\psi=0,
\end{aligned}
\label{eq:kkt}
\end{equation}
where $\odot$ denotes the elementwise product; \eqref{eq:kkt} is the standard Box-QP optimality system of~\cite{wu2025direct}. We solve
\eqref{eq:boxqp} with the feasible Mehrotra predictor--corrector IPM of Algorithm~\ref{alg:boxqp-ipm}: a predictor--corrector IPM for the same problem class with a certified $O(\sqrt n)$ iteration bound has recently been proposed~\cite{wu2026boxQP}.

\begin{remark}[Box constraints with costs]
\label{rem:scope}
Unlike the original DeePC~\cite{coulson2019data}, whose constraint sets $\mathcal{U},\mathcal{Y}$ may be arbitrary convex sets when solved by a general-purpose solver, the reduced program obtained after eliminating $g$ remains a convex problem for any convex $\mathcal U,\mathcal Y$, but remains a \emph{quadratic program} only when these sets are polyhedral. The specialized online BoxQP solver considered here further requires componentwise box constraints; coupled polyhedral constraints such as $Az\le b$ or nonlinear constraints such as $y_1^2+y_2^2\le1$ require a more general QP or nonlinear-programming solver. This concerns \textbf{only the constraint set} of the online problem, not the system itself, since the plant enters only through the recorded data and the soft consistency penalties~\eqref{eq:penalized}.
\end{remark}

\begin{remark}[Dense Newton system]
\label{rem:dense-hessian}
The proposed BoxQP has the same bound constraints and Mehrotra predictor--corrector IPM structure as \cite{wu2026koopman}, but a different Hessian structure. In particular, $\widehat H$ is generally dense even when $H_0$ is diagonal; therefore, the diagonal-block Schur elimination of \cite{wu2026koopman} is not applicable. We retain the same KKT equations, but factor the full dense Newton matrix
\[
\widehat H+D(\gamma/\phi+\theta/\psi)
\]
by Cholesky at each IPM iteration and reuse the factor for the predictor and corrector directions.

Since $\widehat H$ is constant and only the barrier diagonal varies, the Newton matrix has a fixed-plus-diagonal form. This structure could instead use a bounded sequence of rank-one inverse updates as in \cite{wu2026n3QP}, reducing the per-iteration linear-algebra cost from $O(n^3)$ to $O(n^2)$. We retain dense Cholesky in this work and leave this substitution for future work.
\end{remark}

\subsubsection{Strong convexity and penalty consistency}
\label{sec:feas}

\begin{theorem}[Strong convexity and spectral bounds]
\label{thm:feas}
Let $\lambda,\rho,\lambda_g>0$, $W_u,W_y\succeq0$, and assume
$\mathcal U^N\times\mathcal Y^N$ is a nonempty closed box. Recall from
\eqref{eq:woodbury-A}--\eqref{eq:woodbury-reduced} that
$\Sigma=\operatorname{blkdiag}(\lambda I_{T_{\mathrm{ini}}(m+p)},\lambda I_{Nm},\rho I_{Np})$,
$H_0=\operatorname{blkdiag}(W_u,W_y)\succeq0$, and that $K_{zz}$ is the
principal submatrix of $K=(\Sigma^{-1}+\lambda_g^{-1}AA^\top)^{-1}$ associated with $z$. Then the reduced Hessian $\widehat H=H_0+K_{zz}$ satisfies
\begin{equation}
\begin{aligned}
\lambda_{\min}(\widehat H)&\ge\lambda_{\min}(H_0)
+\Bigl(\tfrac{1}{\min(\lambda,\rho)}+\tfrac{\lVert A\rVert_2^2}{\lambda_g}\Bigr)^{-1}>0,\\
\lambda_{\max}(\widehat H)&\le\lambda_{\max}(H_0)+\max(\lambda,\rho),
\end{aligned}
\label{eq:spectral-bounds}
\end{equation}
where $\lVert A\rVert_2$ is the spectral norm of $A$. Consequently, the reduced QP \eqref{eq:boxqp} admits a unique minimizer for any recorded data.
\end{theorem}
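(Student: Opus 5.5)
The plan is to work directly with the full matrix $K=(\Sigma^{-1}+\lambda_g^{-1}AA^\top)^{-1}$. Its extreme eigenvalues transfer to the principal submatrix $K_{zz}$ by Cauchy interlacing, or equivalently through Rayleigh quotients restricted to vectors supported on the $z$-coordinates. Weyl's inequality then adds $H_0$.

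\textbf{Step 1: spectral bounds on $G=\Sigma^{-1}+\lambda_g^{-1}AA^\top$.} Since $\Sigma^{-1}$ is diagonal with entries in $\{1/\lambda,1/\rho\}$, we have
\[
\lambda_{\min}(\Sigma^{-1})=\tfrac{1}{\max(\lambda,\rho)},\qquad \lambda_{\max}(\Sigma^{-1})=\tfrac{1}{\min(\lambda,\rho)}.
\]
Moreover $0\preceq\lambda_g^{-1}AA^\top\preceq \lambda_g^{-1}\|A\|_2^2 I$. Weyl's inequality therefore gives
\[
\tfrac{1}{\max(\lambda,\rho)}\le\lambda_{\min}(G),\qquad \lambda_{\max}(G)\le \tfrac{1}{\min(\lambda,\rho)}+\tfrac{\|A\|_2^2}{\lambda_g}.
\]
In particular $G\succ0$, so $K=G^{-1}$ exists. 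Inverting yields
\[
\Bigl(\tfrac{1}{\min(\lambda,\rho)}+\tfrac{\|A\|_2^2}{\lambda_g}\Bigr)^{-1}I\preceq K\preceq \max(\lambda,\rho)\,I.
\]

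\textbf{Step 2: pass to $K_{zz}$.} Let $E$ be the selector with $K_{zz}=E^\top KE$ and $E^\top E=I$. For any unit vector $v$, $v^\top K_{zz}v=(Ev)^\top K(Ev)$, and $Ev$ is again a unit vector. Hence the bounds of Step 1 hold verbatim for $K_{zz}$.

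\textbf{Step 3: add $H_0$.} Weyl's inequality gives $\lambda_{\min}(H_0+K_{zz})\ge\lambda_{\min}(H_0)+\lambda_{\min}(K_{zz})$ and $\lambda_{\max}(H_0+K_{zz})\le\lambda_{\max}(H_0)+\lambda_{\max}(K_{zz})$, which are exactly \eqref{eq:spectral-bounds}. Strict positivity follows because $\lambda_{\min}(H_0)\ge0$ while the second term is a finite positive number.

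\textbf{Step 4: unique minimizer.} $\widehat H\succ0$ makes the objective of \eqref{eq:boxqp} strongly convex. The feasible set (the rescaled box $[-\mathbf 1,\mathbf 1]$) is nonempty, closed and convex; it is also compact, although coercivity alone would suffice. Hence a minimizer exists and is unique. Nothing in the argument used any property of the data beyond the finiteness of $\|A\|_2$, so the conclusion holds for any recorded data.

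\textbf{Anticipated obstacle.} The only genuine check is that $\widehat H=H_0+K_{zz}$ is indeed the Hessian of the reduced cost, with $K$ the matrix arising from the Woodbury identity. The statement asks us to recall this from \eqref{eq:woodbury-reduced}, so I would take it as given. If pressed, I would verify it quickly by noting that partial minimization over $g$ of $\tfrac12\|Ag-b\|_\Sigma^2+\tfrac{\lambda_g}{2}\|g\|^2$ gives the quadratic form
\[
\tfrac12 b^\top\bigl(\Sigma-\Sigma A(\lambda_gI+A^\top\Sigma A)^{-1}A^\top\Sigma\bigr)b,
\]
which equals $\tfrac12 b^\top Kb$ by the Woodbury identity.

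The rest is routine eigenvalue bookkeeping. The one subtlety is the direction of the inequalities under inversion, namely $\lambda_{\min}(G^{-1})=1/\lambda_{\max}(G)$.
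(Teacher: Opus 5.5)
Your proposal is correct and follows essentially the same route as the paper. Both bound the spectrum of $G=\Sigma^{-1}+\lambda_g^{-1}AA^\top$, invert to sandwich $K$ between $\bigl(\tfrac{1}{\min(\lambda,\rho)}+\tfrac{\lVert A\rVert_2^2}{\lambda_g}\bigr)^{-1}I$ and $\max(\lambda,\rho)I$, pass to $K_{zz}$ by interlacing (your selector/Rayleigh-quotient step is the same fact), and add $H_0$ via Weyl; the only cosmetic difference is that the paper gets the upper bound from $G\succeq\Sigma^{-1}\Rightarrow K\preceq\Sigma$ rather than from Weyl applied to $G$.
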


\begin{proof}
Let $G=\Sigma^{-1}+\lambda_g^{-1}AA^\top$. Since $G\succeq\Sigma^{-1}\succ0$,
$K=G^{-1}\preceq\Sigma$, so $\lambda_{\max}(K)\le\max(\lambda,\rho)$; and
$\lambda_{\max}(G)\le1/\min(\lambda,\rho)+\lVert A\rVert_2^2/\lambda_g$
bounds $\lambda_{\min}(K)=1/\lambda_{\max}(G)$ from below. Because $K_{zz}$
is a principal submatrix of $K$, eigenvalue interlacing gives
$\lambda_{\min}(K)\le\lambda_{\min}(K_{zz})$ and
$\lambda_{\max}(K_{zz})\le\lambda_{\max}(K)$, and Weyl's inequalities
applied to $\widehat H=H_0+K_{zz}$ yield \eqref{eq:spectral-bounds}. The
objective is therefore strongly convex and coercive, which result in a unique minimizer over the nonempty closed convex feasible set.
\end{proof}

\begin{proposition}[Lipschitz sensitivity of the optimizer]
\label{prop:lipschitz}
Let $\mu:=\lambda_{\min}(\widehat H)$, and let $z_i^\star$ denote the
minimizer of \eqref{eq:boxqp} for measured window $f_i$ and reference
$r_i$, $i=1,2$, where $f_t=w_{\mathrm p,t}$ is the measured window itself.
Then
\begin{equation}
\lVert z_1^\star-z_2^\star\rVert_2\le\frac{1}{\mu}
\bigl(\lVert K_{zf}\rVert_2\lVert f_1-f_2\rVert_2
+\lVert W_y\rVert_2\lVert r_1-r_2\rVert_2\bigr),
\label{eq:lipschitz}
\end{equation}
with $\lVert K_{zf}\rVert_2\le\max(\lambda,\rho)$.
The applied input is the first block of $z^\star$ and therefore satisfies the same bound. Hence, small changes in the measured window or reference produce proportionally small changes in the control action.
\end{proposition}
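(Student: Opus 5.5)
The proof uses the standard Lipschitz stability of strongly convex quadratic programs under perturbation of the linear term. The feasible set is the fixed box $\mathcal B:=[-\mathbf 1,\mathbf 1]$, which is independent of $(f,r)$, and $\widehat H$ is also independent of $(f,r)$. Only the linear term $\widehat h$ changes. By Theorem~\ref{thm:feas}, $\mu=\lambda_{\min}(\widehat H)>0$, so each instance has a unique minimizer $z_i^\star$. By \eqref{eq:hu-hy-fast}, $\widehat h_i=K_{zf}f_i+\operatorname{col}(0,-W_yr_i)$.

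\textbf{Step 1: variational inequalities.} The first-order optimality condition for a convex objective over the convex set $\mathcal B$ gives
\[
(\widehat H z_1^\star+\widehat h_1)^\top(z-z_1^\star)\ge0
\quad\text{and}\quad
(\widehat H z_2^\star+\widehat h_2)^\top(z-z_2^\star)\ge0
\quad\text{for all } z\in\mathcal B.
\]
Equivalently, one can use the KKT system \eqref{eq:kkt}. I will take $z=z_2^\star$ in the first inequality and $z=z_1^\star$ in the second, then add them. This yields
\[
(z_1^\star-z_2^\star)^\top\widehat H(z_1^\star-z_2^\star)\le(\widehat h_2-\widehat h_1)^\top(z_1^\star-z_2^\star).
\]
The left side is bounded below by $\mu\|z_1^\star-z_2^\star\|_2^2$, and Cauchy--Schwarz bounds the right side. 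Dividing through gives
\[
\|z_1^\star-z_2^\star\|_2\le\mu^{-1}\|\widehat h_1-\widehat h_2\|_2.
\]

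\textbf{Step 2: bound the linear-term difference.} We have $\widehat h_1-\widehat h_2=K_{zf}(f_1-f_2)+\operatorname{col}(0,-W_y(r_1-r_2))$. The triangle inequality, together with the fact that zero-padding preserves the norm, gives the two terms in \eqref{eq:lipschitz}.

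\textbf{Step 3: bound $\|K_{zf}\|_2$.} $K_{zf}$ is an off-diagonal block of $K$. It equals $E_z^\top K E_f$, where $E_z$ and $E_f$ are column selectors with orthonormal columns. Hence $\|K_{zf}\|_2\le\|K\|_2=\lambda_{\max}(K)$. The proof of Theorem~\ref{thm:feas} showed $K\preceq\Sigma$, so $\lambda_{\max}(K)\le\max(\lambda,\rho)$.

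\textbf{Step 4: the applied input.} The first input block is $Sz^\star$ for a selector $S$ with $\|S\|_2=1$, so it inherits the same bound. If the applied input is the unscaled physical input, the affine rescaling of the box multiplies the bound by a fixed constant; I would note this, or interpret the claim in scaled coordinates.

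None of these steps is hard. The only point needing care is Step 3: one must justify bounding the off-diagonal block by the full spectral norm, rather than by the diagonal-block interlacing used in Theorem~\ref{thm:feas}, which does not apply to off-diagonal blocks.
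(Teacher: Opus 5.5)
Your proposal is correct and follows essentially the same route as the paper: add the two variational inequalities over the fixed box and use $\mu=\lambda_{\min}(\widehat H)>0$ with Cauchy--Schwarz. Then split $\widehat h_1-\widehat h_2$ by the triangle inequality and bound $\lVert K_{zf}\rVert_2\le\lVert K\rVert_2\le\max(\lambda,\rho)$ via $K\preceq\Sigma$; your selector-matrix argument makes explicit a step the paper only asserts. Your Step~4 rescaling caveat is unnecessary if you run the same argument directly over the physical box $\mathcal U^N\times\mathcal Y^N$, since it uses only closedness and convexity of the feasible set, and there $\widehat H=H_0+K_{zz}$ and $\widehat h_i=K_{zf}f_i+\operatorname{col}(0,-W_yr_i)$ are exactly the unscaled quantities, so no extra constant appears.
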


\begin{proof}
Optimality over the closed convex box gives
$(\widehat Hz_i^\star+\widehat h_i)^\top(z-z_i^\star)\ge0$ for all feasible
$z$. Taking $z=z_j^\star$, $j\ne i$, and adding the two inequalities yields
$\mu\lVert z_1^\star-z_2^\star\rVert_2^2
\le\lVert\widehat h_1-\widehat h_2\rVert_2\lVert z_1^\star-z_2^\star\rVert_2$.
Substituting $\widehat h_i=K_{zf}f_i+\operatorname{col}(0,-W_yr_i)$ gives
\eqref{eq:lipschitz}, and $\lVert K_{zf}\rVert_2\le\lVert K\rVert_2\le\max(\lambda,\rho)$.
\end{proof}

\begin{proposition}[Residual bound]
\label{cor:residual}
Suppose there exists $(\bar g,\bar u,\bar y)$ with $Z\bar g=w_t(\bar u)$, $Y_f\bar g=\bar y$, $\bar u\in\mathcal U^N$, $\bar y\in\mathcal Y^N$, and let $J_{\mathrm{feas}}$ be the value of the full objective of \eqref{eq:penalized} at this point, including all constant terms rather than the reduced cost \eqref{eq:Jred}; since its penalty terms vanish,
$J_{\mathrm{feas}}$ is independent of $\lambda,\rho$. Then any minimizer
$(g^\star,u^\star,y^\star)$ of \eqref{eq:penalized} satisfies
\begin{equation}
\begin{aligned}
\lVert Zg^\star-w_t(u^\star)\rVert_2\le\sqrt{\frac{2J_{\mathrm{feas}}}{\lambda}},
\qquad\\
\operatorname{dist}\bigl(Y_fg^\star,\mathcal Y^N\bigr)
\le\lVert Y_fg^\star-y^\star\rVert_2\le\sqrt{\frac{2J_{\mathrm{feas}}}{\rho}}.
\label{eq:residual-bound}
\end{aligned}
\end{equation}
The second bound shows that the behavioral prediction is restricted within
$\sqrt{2J_{\mathrm{feas}}/\rho}$ of $\mathcal Y^N$.
\end{proposition}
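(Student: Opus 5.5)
The plan is a standard optimality-comparison argument for penalty methods. First, compare the objective value at any minimizer $(g^\star,u^\star,y^\star)$ of \eqref{eq:penalized} with its value at the feasible point $(\bar g,\bar u,\bar y)$. The point $(\bar g,\bar u,\bar y)$ satisfies the box constraints $\bar u\in\mathcal U^N$, $\bar y\in\mathcal Y^N$, so it is admissible for \eqref{eq:penalized}. Its full objective value is therefore an upper bound on the optimal value:
\[
J(g^\star,u^\star,y^\star)\le J(\bar g,\bar u,\bar y)=J_{\mathrm{feas}}.
\]
At $(\bar g,\bar u,\bar y)$ both penalty terms vanish because $Z\bar g=w_t(\bar u)$ and $Y_f\bar g=\bar y$. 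Hence
\[
J_{\mathrm{feas}}=\tfrac12(\bar y-r)^\top W_y(\bar y-r)+\tfrac12\bar u^\top W_u\bar u+\tfrac{\lambda_g}{2}\|\bar g\|_2^2,
\]
which contains no $\lambda$ or $\rho$. This confirms the claimed independence from those two weights. It matters that we use the full objective with all constants, not the reduced cost \eqref{eq:Jred}, since that is where the constant terms discarded in \eqref{eq:Jred} would otherwise break the comparison.

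Second, split $J(g^\star,u^\star,y^\star)$ into its five terms. Every term is nonnegative because $W_y,W_u\succeq0$ and $\lambda,\rho,\lambda_g>0$. Dropping all terms except one therefore gives
\[
\tfrac{\lambda}{2}\|Zg^\star-w_t(u^\star)\|_2^2\le J_{\mathrm{feas}},\qquad
\tfrac{\rho}{2}\|Y_fg^\star-y^\star\|_2^2\le J_{\mathrm{feas}}.
\]
Dividing by $\lambda/2$ and $\rho/2$ and taking square roots yields the first bound and the right-hand inequality of the second bound in \eqref{eq:residual-bound}.

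Third, the distance inequality follows directly from feasibility of $y^\star$. Since $y^\star\in\mathcal Y^N$, we have
\[
\operatorname{dist}(Y_fg^\star,\mathcal Y^N)=\inf_{y\in\mathcal Y^N}\|Y_fg^\star-y\|_2\le\|Y_fg^\star-y^\star\|_2.
\]
The interpretive sentence, that the behavioral prediction $Y_fg^\star$ lies within $\sqrt{2J_{\mathrm{feas}}/\rho}$ of $\mathcal Y^N$, then follows immediately.

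No step is a genuine obstacle. The point needing the most care is existence of a minimizer, so that "any minimizer" is not vacuous. This holds by Lemma~\ref{lem:elim} together with Theorem~\ref{thm:feas}. The reduced problem in $z=(u,y)$ has a unique minimizer $z^\star$. Setting $g^\star=g^\star(u^\star,y^\star)$ as in \eqref{eq:gstar} then gives a minimizer of the full problem \eqref{eq:penalized}, because for each fixed $(u,y)$ the inner minimization over $g$ is attained there.
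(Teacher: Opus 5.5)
Your proof is correct and follows the paper's argument: bound the optimal value by $J_{\mathrm{feas}}$, use nonnegativity of each term to isolate the two penalty terms, and get the distance bound from $y^\star\in\mathcal Y^N$. Your existence argument via Lemma~\ref{lem:elim} and Theorem~\ref{thm:feas} is sound but not needed, since the bound is a statement about any minimizer.
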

\begin{proof}
Every term of the full objective of \eqref{eq:penalized} is a nonnegative
quadratic, and its optimal value is at most $J_{\mathrm{feas}}$, so
$\tfrac{\lambda}{2}\lVert Zg^\star-w_t(u^\star)\rVert_2^2\le J_{\mathrm{feas}}$
and $\tfrac{\rho}{2}\lVert Y_fg^\star-y^\star\rVert_2^2\le J_{\mathrm{feas}}$.
The distance bound holds because $y^\star\in\mathcal Y^N$.
\end{proof}

\begin{proposition}[Soft-to-hard limit]
\label{prop:limit}
Under the assumption of Proposition~\ref{cor:residual}, with
$\mathcal U^N,\mathcal Y^N$ compact and $\lambda_g$ fixed, every limit point of $(g^\star,u^\star,y^\star)$ along any sequence $(\lambda_k,\rho_k)$ satisfying $\min\{\lambda_k,\rho_k\}\to\infty$ satisfies \eqref{eq:consistency} exactly and
solves the hard quadratically regularized counterpart of \eqref{eq:deepc-qp}, obtained by adding the ridge $\tfrac{\lambda_g}{2}\lVert g\rVert_2^2$ to its objective.
\end{proposition}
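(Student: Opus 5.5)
The plan is a standard penalty-method argument in three steps: boundedness of the minimizers, exact consistency of any limit point, and optimality of that limit point for the hard ridge-regularized problem. Throughout, let $(g_k,u_k,y_k)$ denote a minimizer of \eqref{eq:penalized} at weights $(\lambda_k,\rho_k)$; it exists and is unique because the reduced QP is strongly convex by Theorem~\ref{thm:feas} and $g^\star$ is recovered from \eqref{eq:gstar}. Write the hard quadratically regularized problem as
\[
\min_{g,u,y}\ \Phi(g,u,y):=\tfrac12(y-r)^\top W_y(y-r)+\tfrac12u^\top W_uu+\tfrac{\lambda_g}{2}\|g\|_2^2
\]
subject to \eqref{eq:consistency}, $u\in\mathcal U^N$ and $y\in\mathcal Y^N$. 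The point $(\bar g,\bar u,\bar y)$ from Proposition~\ref{cor:residual} is feasible for this problem, so its feasible set is nonempty. Its objective is strongly convex in $g$ and the constraint set is closed, so it has a minimizer $(\tilde g,\tilde u,\tilde y)$.

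\emph{Boundedness.} Every term of the penalized objective is nonnegative. Comparing with the feasible point gives
\[
\tfrac{\lambda_g}{2}\|g_k\|^2\le \Phi(g_k,u_k,y_k)\le J_{\mathrm{feas}},
\]
so $\|g_k\|\le\sqrt{2J_{\mathrm{feas}}/\lambda_g}$. Since $\lambda_g$ is fixed, this bound is uniform in $k$. Compactness of $\mathcal U^N\times\mathcal Y^N$ bounds $(u_k,y_k)$. Hence the sequence is bounded, limit points exist, and every limit point lies in $\mathbb R^{n_g}\times\mathcal U^N\times\mathcal Y^N$ because the box is closed.

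\emph{Exact consistency.} Proposition~\ref{cor:residual} gives
\[
\|Zg_k-w_t(u_k)\|\le\sqrt{2J_{\mathrm{feas}}/\lambda_k},\qquad \|Y_fg_k-y_k\|\le\sqrt{2J_{\mathrm{feas}}/\rho_k},
\]
and both bounds tend to zero because $\min\{\lambda_k,\rho_k\}\to\infty$. Take a convergent subsequence with limit $(g^\infty,u^\infty,y^\infty)$. The maps $g\mapsto Zg-w_t(u)$ and $g\mapsto Y_fg-y$ are affine and continuous, so the limit satisfies $Zg^\infty=w_t(u^\infty)$ and $Y_fg^\infty=y^\infty$. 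This is exactly \eqref{eq:consistency}.

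\emph{Optimality (the key step).} The point is to compare with the hard optimum $(\tilde g,\tilde u,\tilde y)$ rather than with $(\bar g,\bar u,\bar y)$, which only gave a bound. Because $(\tilde g,\tilde u,\tilde y)$ satisfies \eqref{eq:consistency}, its penalty terms vanish, so for every $k$
\[
\Phi(g_k,u_k,y_k)\le \text{penalized objective at }(g_k,u_k,y_k)\le \text{penalized objective at }(\tilde g,\tilde u,\tilde y)=\Phi(\tilde g,\tilde u,\tilde y).
\]
Pass to the limit along the subsequence using continuity of $\Phi$. This gives $\Phi(g^\infty,u^\infty,y^\infty)\le\Phi(\tilde g,\tilde u,\tilde y)$, the optimal hard value. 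The limit point is feasible for the hard problem by the previous step, so it is a minimizer.

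By strong convexity in $g$ for fixed $(u,y)$, one also expects every limit point to coincide, giving convergence of the whole sequence. That statement is not required, so I would omit it.
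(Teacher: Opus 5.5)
Your proposal is correct and follows the paper's proof essentially step for step: uniform boundedness from $\tfrac{\lambda_g}{2}\|g^\star\|_2^2\le J_{\mathrm{feas}}$ and compactness of the box, exact consistency from the vanishing residual bounds of Proposition~\ref{cor:residual}, and optimality by comparing penalized objective values and passing to the limit. The only difference is that you compare against a hard minimizer $(\tilde g,\tilde u,\tilde y)$, whose existence you must establish first, whereas the paper compares against an arbitrary hard-feasible point and so gets optimality without that step.
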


\begin{proof}
By Proposition~\ref{cor:residual}, both residuals vanish along the sequence. Since $\tfrac{\lambda_g}{2}\lVert g^\star\rVert_2^2\le J_{\mathrm{feas}}$ and $(u^\star,y^\star)$ lies in the bounded closed set
$\mathcal U^N\times\mathcal Y^N$, the minimizers are bounded; let $(g^\dagger,u^\dagger,y^\dagger)$ be the limit of a convergent subsequence. It satisfies \eqref{eq:consistency} because the residuals vanish. For any hard-feasible $(g,u,y)$, optimality and nonnegativity of the penalties
give $J(u^\star,y^\star)+\tfrac{\lambda_g}{2}\lVert g^\star\rVert_2^2
\le J(u,y)+\tfrac{\lambda_g}{2}\lVert g\rVert_2^2$, where $J$ is the tracking-and-effort cost; passing to the limit shows that
$(g^\dagger,u^\dagger,y^\dagger)$ is optimal for the hard problem.
\end{proof}

\section{Numerical Examples}
\label{sec:results}

We evaluate the proposed Fast-DeePC scheme on three benchmarks. Each benchmark compares the original hard-equality $\ell_1$ controller as a control-performance reference, with different solve structure for the same soft quadratic problem: Problem \eqref{eq:penalized} solve with full CVXPY/OSQP, the reduced BoxQP \eqref{eq:boxqp}solved by OSQP, and by Algorithm~\ref{alg:boxqp-ipm}. Comparisons among these simulations evaluate both the formulation and solver acceleration. Table~\ref{tab:settings} lists the benchmark dimensions, horizons, and penalty weights used throughout. Simulations are implemented in Python on a desktop with an Intel i7-11800H CPU and an NVIDIA GeForce RTX 3060 GPU. Codes are available at https://github.com/TerryZhajimi/Fast-DeePC.

\begingroup
\setlength{\abovedisplayskip}{2pt}
\setlength{\belowdisplayskip}{2pt}

\begin{algorithm}[!t]
\footnotesize
\caption{Mehrotra predictor--corrector IPM for the reduced BoxQP}
\label{alg:boxqp-ipm}
\KwIn{$\widehat H$, $\widehat h_t$, tolerance $\epsilon$,
maximum iterations $N_{\max}$, optional warm start $z_{\mathrm{warm}}$.}
\KwOut{Approximate minimizer of
$\frac12z^\top\widehat Hz+\widehat h_t^\top z$ subject to
$-\mathbf1\le z\le\mathbf1$.}

$n\leftarrow\dim(z)$; initialize $z$ by clipping $z_{\mathrm{warm}}$
strictly inside the box, or set $z\leftarrow0$\;
$\phi\leftarrow\mathbf1-z$, \quad $\psi\leftarrow\mathbf1+z$\;

$q\leftarrow\widehat Hz+\widehat h_t$,\quad
$\eta\leftarrow\max\{1,\lVert q\rVert_\infty\}$\;
$\gamma\leftarrow\eta\mathbf1-\frac12q$,\quad
$\theta\leftarrow\eta\mathbf1+\frac12q$\;

\For{$k=0,\ldots,N_{\max}-1$}{
    $\mu\leftarrow
    (\gamma^\top\phi+\theta^\top\psi)/(2n)$\;
    \lIf{$\mu\le\epsilon$}{\Return{$z$}}

    $\bar H\leftarrow\widehat H+
    \operatorname{diag}(\gamma\oslash\phi+\theta\oslash\psi)$;
    factor $\bar H=LL^\top$\;

    \tcp{Affine predictor}
    $r_1^{\mathrm{aff}}\leftarrow-\gamma\odot\phi$,\quad
    $r_2^{\mathrm{aff}}\leftarrow-\theta\odot\psi$\;

    Solve
    $\bar H\Delta z^{\mathrm{aff}}
    =r_2^{\mathrm{aff}}\oslash\psi
    -r_1^{\mathrm{aff}}\oslash\phi$\;

    $\Delta\phi^{\mathrm{aff}}\leftarrow-\Delta z^{\mathrm{aff}}$,\quad
    $\Delta\psi^{\mathrm{aff}}\leftarrow\Delta z^{\mathrm{aff}}$\;
    $\Delta\gamma^{\mathrm{aff}}\leftarrow
    (\gamma\oslash\phi)\odot\Delta z^{\mathrm{aff}}
    +r_1^{\mathrm{aff}}\oslash\phi$\;
    $\Delta\theta^{\mathrm{aff}}\leftarrow
    -(\theta\oslash\psi)\odot\Delta z^{\mathrm{aff}}
    +r_2^{\mathrm{aff}}\oslash\psi$\;

    $\alpha^{\mathrm{aff}}\leftarrow
    \min\!\left\{1,\,
    0.99\min_{\substack{v\in\{\gamma,\theta,\phi,\psi\}\\
                        \Delta v_i^{\mathrm{aff}}<0}}
    \left(-v_i/\Delta v_i^{\mathrm{aff}}\right)\right\}$\;

    $v_+^{\mathrm{aff}}\leftarrow
    v+\alpha^{\mathrm{aff}}\Delta v^{\mathrm{aff}}$
    for $v\in\{\gamma,\theta,\phi,\psi\}$\;
    $\mu^{\mathrm{aff}}\leftarrow
    [(\gamma_+^{\mathrm{aff}})^\top\phi_+^{\mathrm{aff}}
    +(\theta_+^{\mathrm{aff}})^\top\psi_+^{\mathrm{aff}}]/(2n)$\;
    $\sigma\leftarrow\min\{1,(\mu^{\mathrm{aff}}/\mu)^3\}$\;

    \tcp{Centering corrector}
    $r_1\leftarrow\sigma\mu\mathbf1-\gamma\odot\phi
    -\Delta\gamma^{\mathrm{aff}}\odot\Delta\phi^{\mathrm{aff}}$\;
    $r_2\leftarrow\sigma\mu\mathbf1-\theta\odot\psi
    -\Delta\theta^{\mathrm{aff}}\odot\Delta\psi^{\mathrm{aff}}$\;

    Solve $\bar H\Delta z=r_2\oslash\psi-r_1\oslash\phi$ using $L$\;
    $\Delta\phi\leftarrow-\Delta z$,\quad
    $\Delta\psi\leftarrow\Delta z$\;
    $\Delta\gamma\leftarrow
    (\gamma\oslash\phi)\odot\Delta z+r_1\oslash\phi$\;
    $\Delta\theta\leftarrow
    -(\theta\oslash\psi)\odot\Delta z+r_2\oslash\psi$\;

    $\alpha\leftarrow
    \min\!\left\{1,\,
    0.99\min_{\substack{v\in\{\gamma,\theta,\phi,\psi\}\\
                        \Delta v_i<0}}
    \left(-v_i/\Delta v_i\right)\right\}$\;

    $(z,\gamma,\theta,\phi,\psi)\leftarrow
    (z,\gamma,\theta,\phi,\psi)
    +\alpha(\Delta z,\Delta\gamma,\Delta\theta,\Delta\phi,\Delta\psi)$\;
}
\Return{$z$}\;
\end{algorithm}
\endgroup

\begin{table}[t]
\centering
\footnotesize
\setlength{\tabcolsep}{4pt}
\caption{Benchmark settings.}
\label{tab:settings}
\begin{tabular}{@{}lrrrrrrrr@{}}
\toprule
Benchmark & $T$ & $T_{\mathrm{ini}}$ & $N$ & $m$ & $p$ & $\lambda{=}\rho$ & $\lambda_g$ & $N(m{+}p)$ \\
\midrule
Rocket    & 225  & 1  & 10 & 3 & 6  & $10^5$ & 281.17 & 90 \\
Two-tank  & 1500 & 2  & 50 & 2 & 2  & $10^6$ & 2      & 200 \\
Traffic   & 1200 & 20 & 50 & 2 & 20 & $10^4$ & 20     & 600 \\
\bottomrule
\end{tabular}
\end{table}

\subsection{Nonlinear rocket landing}
The rocket benchmark uses the DeePC-Hunt VTVL landing simulator~\cite{cummins2024deepc}, with state $x=\operatorname{col}(p_x,p_y,\dot p_x,\dot p_y,\theta,\dot\theta)\in\mathbb{R}^6$ (planar position, velocity, pitch, and pitch rate) and input $u=\operatorname{col}(F_E,F_S,\phi)\in\mathbb{R}^3$ (normalized main thrust, side thrust, and gimbal angle), with full-state output $p=6$. With full-state feedback and $T_{\mathrm{ini}}=1$, this benchmark is effectively a state-feedback predictive-control problem rather than a
test of state estimation from output data alone. Tracking weights are $W_y=2(I_N\otimes Q)$, $W_u=2(I_N\otimes R)$ with $Q=\operatorname{diag}(100,15,25,2.2,6000,120)$ and $R=\operatorname{diag}(0.01,0.30,0.30)$; the $\lambda_g$ are selected by the differentiable-optimization training procedure of DeePC-Hunt. The predicted pitch is constrained to $|\theta|\le0.45$ rad, a safety margin relative to the physical tipping limit $|\theta|\le0.6$ rad. The original DeePC-Hunt baseline retains its original DeePC-Hunt formulation with its trained regularizers.

For the original DeePC-Hunt baseline, we retain its distinct formulation, original tracking weights, and trained regularizers as a control-performance baseline. The full quadratic CVXPY/OSQP and Fast-DeePC controllers solve the identical soft-constrained problem.

All three controllers are simulated from the same initial state and land successfully. Figure~\ref{fig:rocket-traj} shows the descent paths of the original DeePC-Hunt controller and the proposed Fast-DeePC Scheme. The proposed solver and the CVXPY baseline solve the identical soft-constrained problem and achieve a closed-loop trajectory RMSE of $7\times10^{-5}$, confirming that the offline elimination of $g$ reproduces the full penalized solution up to solver tolerance.

\begin{figure}[t]
\centering
\includegraphics[width=\linewidth]{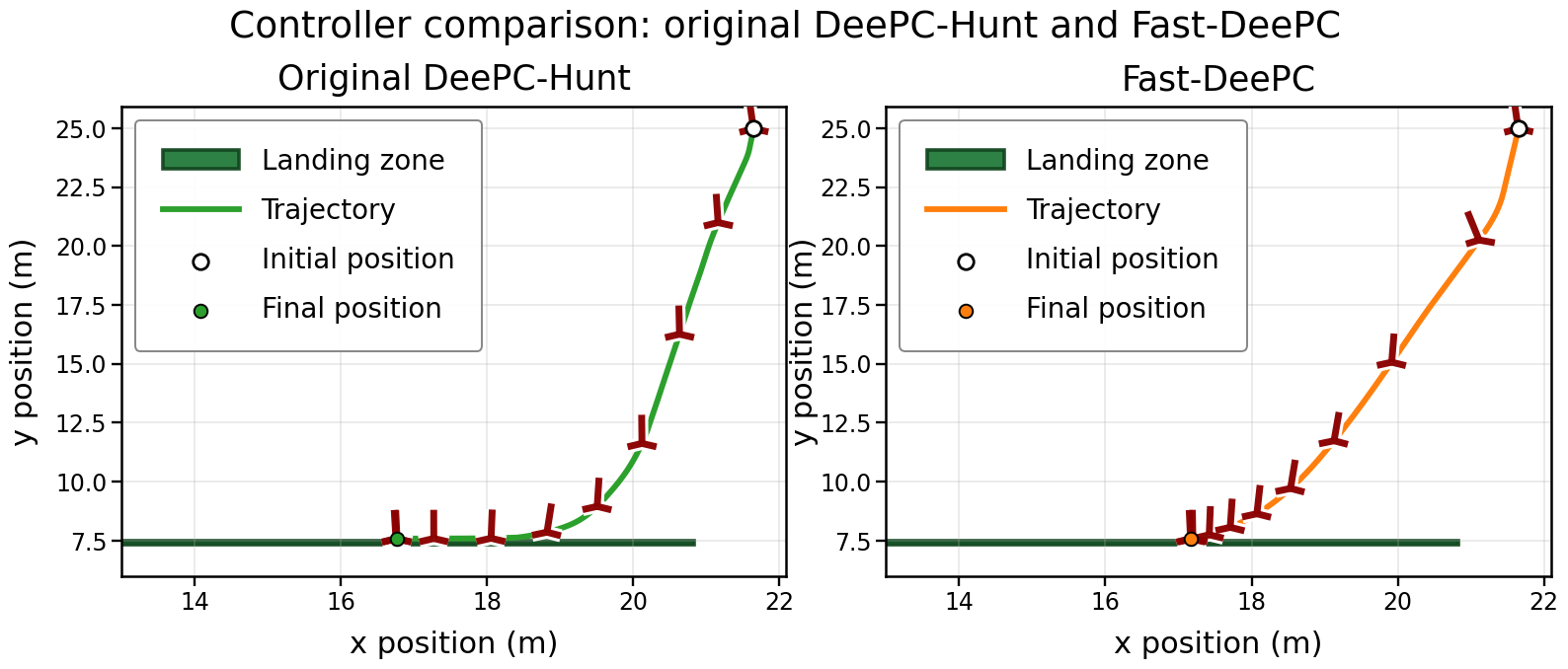}
\caption{Rocket closed-loop all states landing trajectory simulation}
\label{fig:rocket-traj}
\end{figure}

\subsection{Two-tank system}
The two-tank plant is linear, with fixed matrices $A,B,C,D$; its $n=2$ states are the two tank levels, with $m=2$ flow-command inputs and $p=2$ measured outputs. Tracking weights are $W_y=2I_{100}$, $W_u=0$ (both channels weighted equally, no input-effort term); both controllers enforce $u\in[-1,1]$, $y\in[-1,1]$, and the $\ell_1$ baseline uses the same $\lambda_g=2$.

Figure~\ref{fig:twotank-control} shows that Fast-DeePC tracks the reference more closely than the original hard-$\ell_1$ DeePC baseline in this experiment. Fast-DeePC reduces the tracking RMSE from $0.209$ to $0.184$ and the common closed-loop cost from $2.63$ to $2.04$, while the realized outputs of both controllers remain within their prescribed bounds. Separately, the reduced native-OSQP and C solutions have a closed-loop trajectory RMSE of $0.007$, providing a numerical check of the C implementation. The mean behavioral residuals are below $10^{-6}$, confirming that the soft solution remains close to the underlying behavioral relations.

\begin{figure}[t]
\centering
\includegraphics[width=\linewidth]{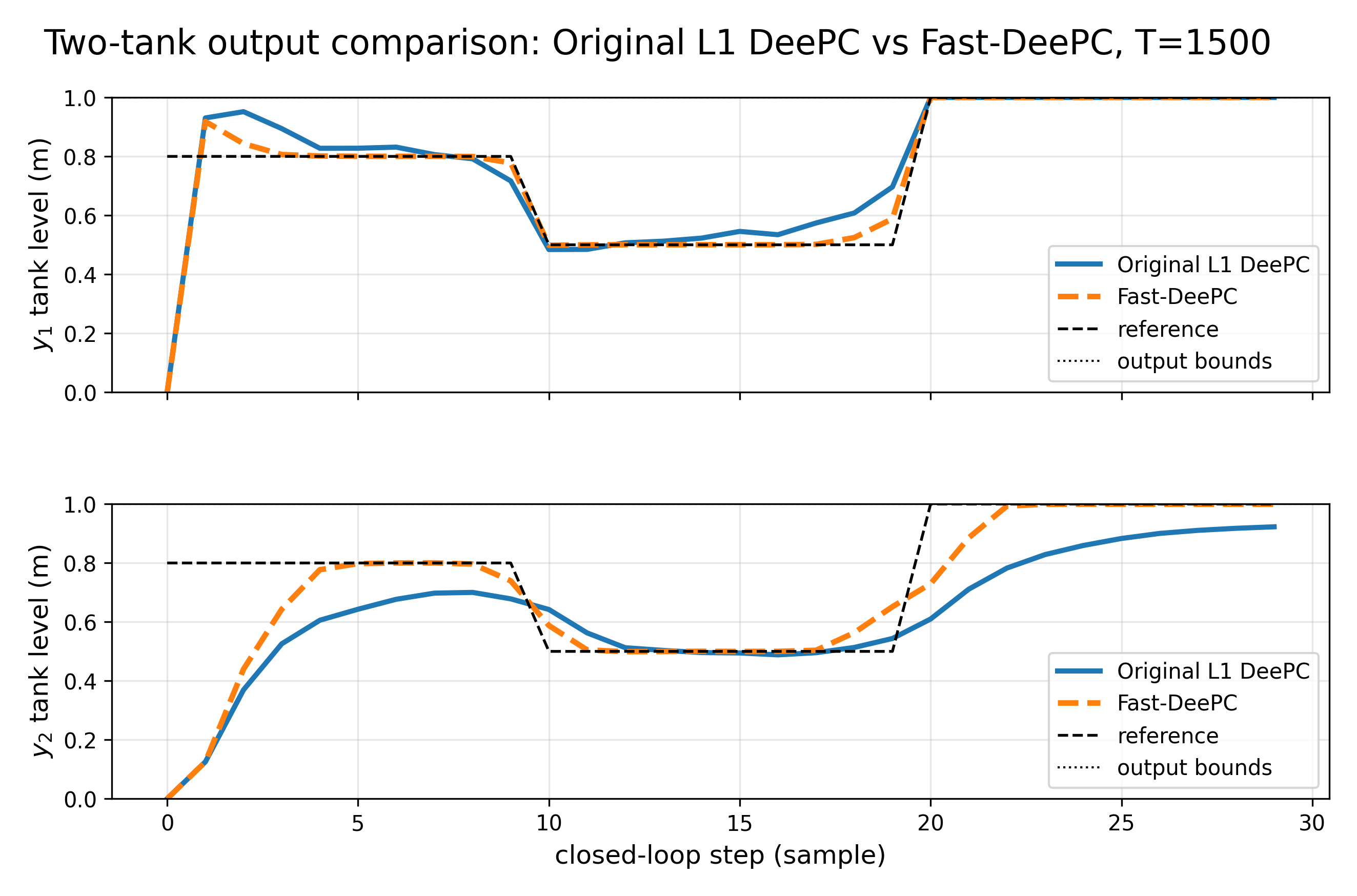}
\vspace{-20pt}
\caption{Two-tank output comparison tracking against the reference}
\label{fig:twotank-control}
\end{figure}

\subsection{Mixed-traffic platoon}
The eight-vehicle mixed-traffic DeeP-LCC platoon~\cite{wang2023deep} has two connected autonomous vehicles (CAVs), at positions 3 and 6, and six human-driven vehicles (HDVs) following nonlinear optimal-velocity-model dynamics. The control input is the CAV acceleration $u\in\mathbb{R}^2$; the head-vehicle velocity is a measured exogenous signal, appended as an additional Hankel block alongside $U_p,Y_p,U_f,Y_f$ following the original DeeP-LCC construction. The output ($p=10$) is the velocity error of all eight vehicles together with the spacing error of the two CAVs relative to the equilibrium spacing $s^\star=20$ m, so the absolute spacing bound $[5,40]$ m corresponds to a spacing-error bound of $[-15,20]$ m; sampling period is $\Delta t=0.05$ s. Stage-cost weights are $w_v=1$, $w_s=0.5$, $w_u=0.1$, matching the original DeeP-LCC weighting; CAV accelerations are constrained to $[-5,2]~\mathrm{m/s^2}$.

Figure~\ref{fig:lcc-control} shows comparable tracking RMSE and cost between Fast-DeePC and the hard-$\ell_1$ baseline. The hard-$\ell_1$ baseline required $28$ automatic-emergency-braking interventions (an external safety layer triggered whenever its optimized inputs alone were about to violate the spacing constraint)  to avoid collision. Fast-DeePC, with automatic-emergency-braking disabled throughout, avoided collision using its optimized inputs alone.

\begin{figure}[t]
\centering
\includegraphics[width=\linewidth]{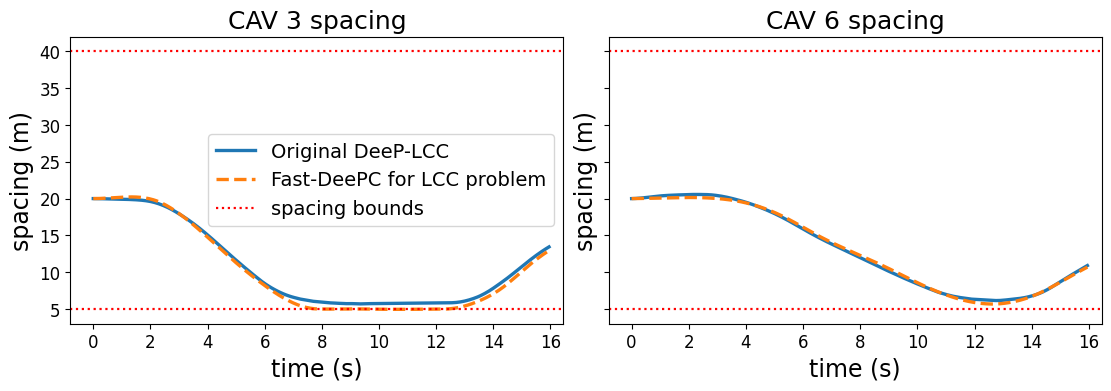}
\vspace{-15pt}
\caption{CAV spacing under a traffic disturbance with all controllers}
\label{fig:lcc-control}
\end{figure}

\begin{figure}[t]
\centering
\includegraphics[width=\linewidth]{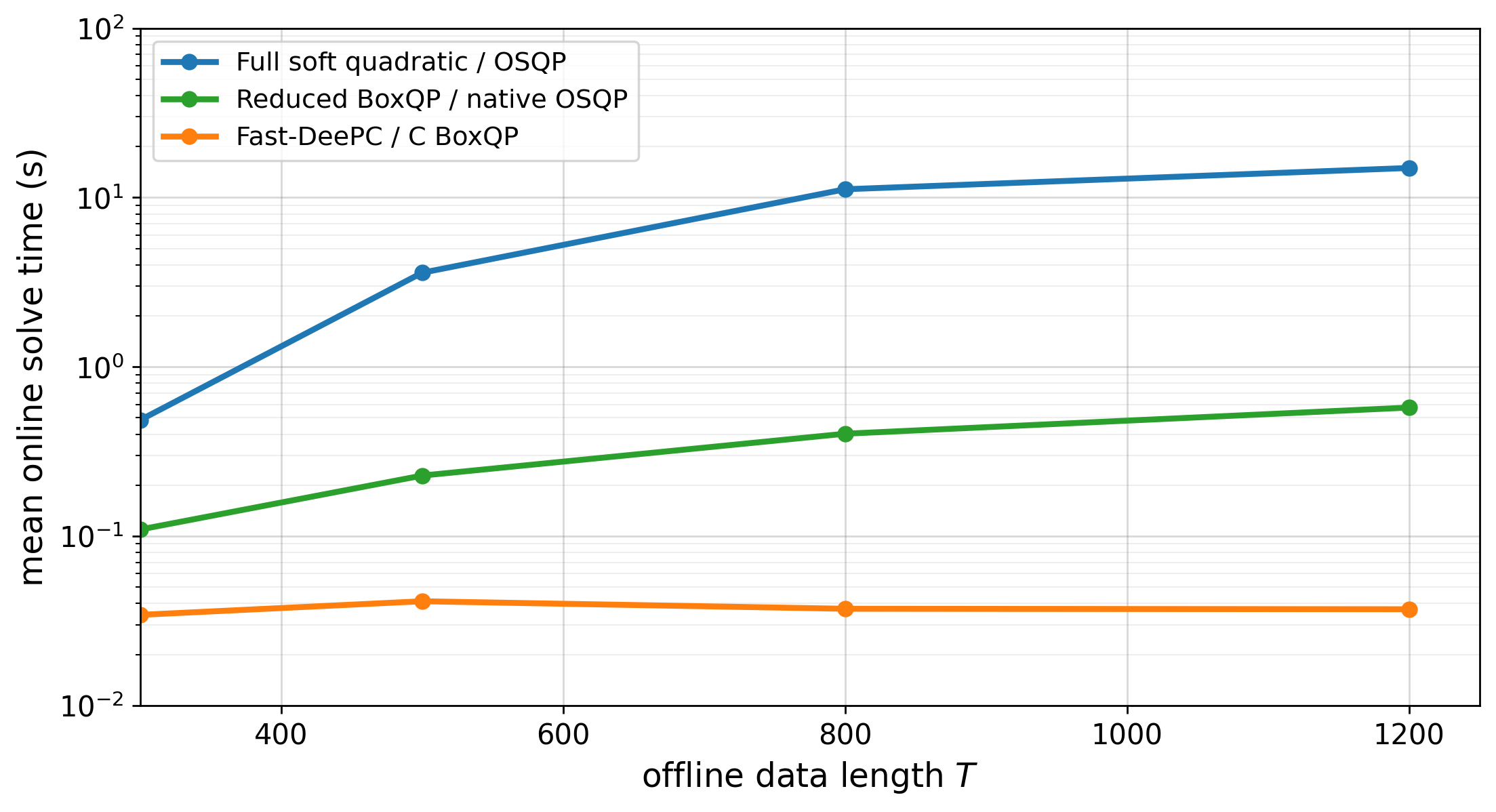}
\vspace{-15pt}
\caption{Mean per-step online solve time with increasing offline data
length $T$ on the mixed-traffic DeeP-LCC benchmark.}
\label{fig:scaling-traffic}
\end{figure}

Figure~\ref{fig:scaling-traffic} and Table~\ref{tab:speedup} separate the effects of eliminating $g$ and using the specialized solver. Relative to full soft OSQP, the C BoxQP solver is approximately $19\times$ faster on the two-tank benchmark and more than two orders of magnitude faster on the mixed-traffic benchmark. On the smaller rocket problem, the C solver is approximately $8.5\times$ faster than full soft OSQP and exhibits fewer iterations and lower timing variability.

\begin{table}[!htbp]
\centering
\footnotesize
\setlength{\tabcolsep}{3pt}
\renewcommand{\arraystretch}{1.08}
\caption{Online solve time (s) and solver iterations.}
\label{tab:speedup}
\begin{tabular}{@{}llrrr@{}}
\toprule
Benchmark & Method & Mean & Std. & Iter.\ (mean/max) \\
\midrule

\multirow{4}{*}{Two-tank}
& Original hard-$\ell_1$ DeePC
& 1.3506 & 0.0666 & 13.43/15 \\
& Full soft-$\ell_2$/OSQP
& 0.2303 & 0.1253 & 317.50/950 \\
& Reduced/OSQP
& 0.2372 & 0.1850 & 6000/15000 \\
& Reduced/Fast-DeePC
& \textbf{0.0121} & \textbf{0.0034} & 11.63/15 \\
\midrule

\multirow{4}{*}{Rocket}
& Original DeePC-Hunt
& 0.0227 & 0.0126 & 76/425 \\
& Full soft-$\ell_2$/OSQP
& 0.0058 & 0.0010 & 49.50/50 \\
& Reduced/OSQP
& 0.00058 & 0.00045 & 51.58/300 \\
& Reduced/Fast-DeePC
& \textbf{0.00068} & \textbf{0.00017} & 5.66/11 \\
\midrule

\multirow{4}{*}{Traffic}
& Original hard-$\ell_1$ DeeP-LCC
& 22.0861 & 2.8949 & 11.90/14 \\
& Full soft-$\ell_2$/OSQP
& 4.7027 & 7.0354 & 1040.78/8575 \\
& Reduced/OSQP
& 0.2203 & 0.1672 & 445/2400 \\
& Reduced/Fast-DeePC
& \textbf{0.0330} & \textbf{0.0145} & 9.01/16 \\
\bottomrule
\end{tabular}
\end{table}

\section{Conclusion}
\label{sec:conclusion}

This article presented Fast-DeePC, a soft-constrained reformulation of data-enabled predictive control. By relaxing the behavioral constraint into the objective with a quadratic ridge, Fast-DeePC eliminates the coordinate vector exactly by partial minimization, reducing the online problem to a fixed-dimensional box-constrained QP. The reduced problem admits a unique minimizer for any recorded data, its consistency residuals decay with the penalty weights, and its solution approaches hard quadratically regularized DeePC in the limit. On rocket-landing, two-tank, and mixed-traffic benchmarks, the proposed interior-point solver closely tracks a general-purpose solver on the identical soft problem while reducing online solve time by up to two orders of magnitude. Future work will address streaming data updates, in which the offline matrices are refreshed incrementally as new trajectory data become available, and execution-time certification of the online solver.

\bibliography{Ref}
\end{document}